\title{Common Permutation Problem}
\author{Mari\'an Dvorsk\'y\\\texttt{\normalsize marian.dvorsky@gmail.com}}
\date{\today}
\newtheorem{theorem}{Theorem}[section]
\newtheorem{lemma}{Lemma}[section]
\newtheorem{corollary}{Corollary}[section]
\newenvironment{example}[1][Example]{\begin{quotation}\noindent\textbf{Example.} }{\end{quotation}}
\newcommand{\qed}{\ifmmode\square\else{\unskip\nobreak\hfil\penalty 50\hskip 1em\null\nobreak\hfil$\square$\parfillskip=0pt\finalhyphendemerits=0\endgraf}\fi}
\newenvironment{proof}[1][Proof]{\begin{trivlist}\item[]\emph{#1. }\ignorespaces}{\qed\end{trivlist}}
\newcommand{\boundary}{\bullet}
\newcommand{\symb}[1]{\mathtt{#1}}
\newcommand{\prob}[1]{\texttt{#1}}
\newcommand{\symba}[1]{\symb{\triangle^{#1}}}
\newcommand{\symbb}[1]{\symb{\heartsuit^{#1}}}
\newcommand{\symbc}[1]{\symb{\Box^{#1}}}
\newcommand{\lp}{\langle}
\newcommand{\rp}{\rangle}
\newcommand{\varneg}[1]{\overline{#1}}
\newcommand{\symbvar}[2]{\symb{#1^{#2}}}
\newcommand{\symbvarneg}[2]{\symb{\varneg{#1}^{#2}}}
\begin{document}

\maketitle

\begin{abstract}
In this paper we show that the following problem is $NP$-complete: Given an alphabet $\Sigma$ and
two strings over $\Sigma$, the question is whether there exists a permutation
of $\Sigma$ which is a~subsequence of both of the given strings.
\end{abstract}

\section{Introduction}
\label{intro}

In computer science, efficient algorithms for various string problems are
studied. One of such problems is a well-known \prob{Longest Common Subsequence}
problem. For two given strings, the problem is to find
the longest string which is a subsequence of both the strings. A~survey of
efficient algorithms for this problem can be found in \cite{Survey}.

Let us consider a modification of the \prob{Longest Common Subsequence} problem. Instead of finding any
longest common subsequence, we restrict ourselves to subsequences in which
symbols do not repeat, i.e., every symbol occurs at most once. We call this problem
\prob{Longest Restricted Common Subsequence}.\footnote{A more general version of this
problem (with the same name) appeared in \cite{Andrejkova} together with its
efficient solution. Unfortunately, that solution is incorrect. Our result in
this paper indeed shows that an efficient (polynomial) solution for this
problem does not exist unless $P=NP$.}

\begin{example}
For strings ``$\symb{bcaba}$'' and ``$\symb{babcca}$'', the longest
common subsequence is ``$\symb{baba}$`` while the longest restricted common
subsequence is ``$\symb{bca}$''.
\end{example}

\prob{Longest Restricted Common Subsequence} is an optimization problem. In
this paper we consider its special case which is the following decision
problem: Suppose that the two strings are formed over an alphabet $\Sigma$. The
question is, do the two strings contain a restricted common subsequence of the
maximal possible length, i.e., a string that contains \textsl{every} symbol of $\Sigma$
exactly once? Such a string is a permutation of $\Sigma$. Therefore, we call
this problem the \prob{Common Permutation} problem.

\begin{quote}
\prob{Common Permutation}\\
\textsl{Instance}: An alphabet $\Sigma$ and two strings $a,b$ over $\Sigma$.\\
\textsl{Question}: Is there a permutation of $\Sigma$ which is a common
subsequence of $a$ and $b$?
\end{quote}

We will show that \prob{Common Permutation} is $NP$-complete. Moreover, we
will show that \prob{Common Permutation} is $NP$-complete even if the input
strings contain every symbol of $\Sigma$ at most twice.

\prob{Common Permutation} can be reduced to \prob{Longest Restricted
Common Subsequence} by asking whether the longest restricted common
subsequence of the two strings is equal to the size of the alphabet. Since
\prob{Common Permutation} will be shown to be $NP$-complete, it follows that \prob{Longest
Restricted Common Subsequence} is $NP$-hard.

In the next section we define the terms used in this paper. Section
\ref{alignments} introduces \textsl{alignments} as a way to visualize the \prob{Common Permutation}
problem. Finally, Section \ref{proof} presents the proof of
$NP$-completeness by reducing \prob{3SAT} to \prob{Common Permutation}.

\section{Preliminaries}

An \textsl{alphabet} is a finite set of \textsl{symbols}. A \textsl{string}
over an alphabet $\Sigma$ is a finite sequence $a=a_1a_2\dots a_N$ where
$N$ is a \textsl{length} of the string and $a_i\in\Sigma$ for all
$i\in\{1,\dots,N\}$. We say that $a_i$ is a
symbol on a \textsl{position} $i$ in the string $a$. For a given symbol $x\in
\Sigma$, \textsl{occurrences of $x$ in $a$} are all positions $i$ such that $a_i=x$.

A \textsl{subsequence} of a string $a=a_1a_2\dots a_N$ over $\Sigma$
is a string $b=a_{i_1}a_{i_2}\dots a_{i_n}$ where
$n \in \{0,1,\dots, N\}$ and $1\leq i_1 < i_2 < \cdots < i_n\leq N$. A
\textsl{common subsequence} of two strings $a$ and $b$ is a string which is a
subsequence of both $a$ and $b$.

A \textsl{permutation} of a finite set $A=\{x_1,\dots,x_n\}$ is a string
$x_{i_1}x_{i_2}\dots x_{i_n}$ (note that the length of the string is the same as
the number of elements in $A$) where $i_j\in \{1,\dots,n\}$ for
$j\in\{1,\dots,n\}$ and for all $k,l\in \{1,\dots,n\}$ if $k\neq l$ then
$i_k\neq i_l$.

The above definitions give a formal basis for the statement of the problem
from Section \ref{intro}.

\medskip

For the proof of $NP$-completeness in Section \ref{proof} we use the
reduction from \prob{3-Satisfiability} (\prob{3SAT} for short).  The following
definitions are from \cite{GareyJohnson}.

\subsection{3-Satisfiability}

Let $U=\{u_1,u_2,\dots,u_n\}$ be a set of Boolean variables. A \textsl{truth
assignment}
for $U$ is a function $t:U\to\{T,F\}$. If $t(u) = T$ we say that $u$ is true
under $t$; if $t(u) = F$ we say that $u$ is false. If $u$ is a variable in $U$,
then $u$ and $\varneg{u}$ are \textsl{literals} over $U$. The literal $u$ is true under $t$
if and only if the variable $u$ is true under $t$; the literal $\varneg{u}$ is
true if and only if the variable $u$ is false.

A \textsl{clause} over $U$ is a set of literals over $U$, for example
$\{u_1,\varneg{u_3},u_8\}$. It represents the disjunction of those literals and is
satisfied by a truth assignment if and only if at least one of its members is
true under that assignment. In other words, the clause is not satisfied if and
only all its literals are false. The clause above will be satisfied by $t$ unless
$t(u_1)=F$, $t(u_3) = T$, $t(u_8)=F$. A~collection $C$ of clauses over $U$ is
satisfiable if and only if there exists some truth assignment for $U$ that
simultaneously satisfies all the clauses in $C$. Such a truth assignment is
called a satisfying truth assignment for $C$.

\begin{quote}
\prob{3SAT}\\
\textsl{Instance}: A set $U$ of variables and a collection $C$ of clauses over $U$ with
exactly three literals per clause.\\
\textsl{Question}: Is there a satisfying truth assignment for $C$?
\end{quote}

\begin{theorem}
\prob{3SAT} is $NP$-complete.
\end{theorem}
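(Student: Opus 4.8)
The plan is to establish membership in $NP$ and $NP$-hardness separately. For membership, observe that a satisfying truth assignment $t:U\to\{T,F\}$ for an instance $(U,C)$ of \prob{3SAT} is a certificate whose size is linear in $|U|$, and checking that each clause of $C$ contains at least one literal true under $t$ takes time polynomial in the length of the instance; hence \prob{3SAT}$\in NP$.

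For $NP$-hardness there are two routes. The first is the direct Cook--Levin argument: given an arbitrary $L\in NP$ recognised by a nondeterministic Turing machine with polynomial time bound $p$, one builds for each input $w$ a Boolean formula whose variables record, for every time step and tape cell, the cell contents, the head location, and the machine state, and whose clauses assert the initial configuration, that exactly one state and one head position hold at each step, that the transition relation is respected, and that an accepting state is eventually reached. Each such constraint is a conjunction of clauses over a bounded number of variables, the whole formula has size polynomial in $|w|$, and it is satisfiable exactly when $w\in L$; a final step rewrites every clause as a conjunction of clauses with exactly three literals.

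The second route, which I would prefer to present, assumes only that unrestricted \prob{SAT} is $NP$-complete (Cook--Levin; see \cite{GareyJohnson}) and gives a local reduction \prob{SAT}$\,\to\,$\prob{3SAT}. Each clause $c=\{\ell_1,\dots,\ell_k\}$ is replaced by an equisatisfiable collection of $3$-literal clauses over fresh variables: a single-literal clause $\{\ell\}$ becomes the four clauses $\{\ell,y_1,y_2\}$, $\{\ell,\varneg{y_1},y_2\}$, $\{\ell,y_1,\varneg{y_2}\}$, $\{\ell,\varneg{y_1},\varneg{y_2}\}$; a two-literal clause $\{\ell_1,\ell_2\}$ becomes $\{\ell_1,\ell_2,y\}$ and $\{\ell_1,\ell_2,\varneg{y}\}$; a three-literal clause is kept; and a clause of width $k\ge 4$ is replaced by the chain $\{\ell_1,\ell_2,z_1\}$, $\{\varneg{z_1},\ell_3,z_2\}$, $\dots$, $\{\varneg{z_{k-3}},\ell_{k-1},\ell_k\}$ with new variables $z_1,\dots,z_{k-3}$. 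Carried out over all clauses this is computable in polynomial time, and I would then check correctness in both directions: a satisfying assignment of the original collection extends to one of the new collection by choosing the fresh variables so that each gadget ``forwards'' a true literal, and conversely any satisfying assignment of the new collection, restricted to the original variables, satisfies each original clause, since otherwise the corresponding gadget clauses would be jointly unsatisfiable on the fresh variables.

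The main obstacle depends on the chosen foundation. If Cook--Levin is not taken for granted, the laborious part is the explicit tableau encoding of Turing-machine computations together with the size bookkeeping that keeps the formula polynomial. If it is assumed, the only genuinely delicate point is verifying the width-$k$ chain gadget in the forward direction, namely that the $z_i$ can always be set consistently once some $\ell_j$ is true; everything else, including the polynomial bounds on running time and formula size, is routine. I would therefore present the second route, citing \cite{GareyJohnson} for the starting point.
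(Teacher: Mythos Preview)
Your argument is correct: the certificate-based membership check and the clause-width reduction from \prob{SAT} are the standard proof, and the gadgets you describe are the ones in \cite{GareyJohnson}. There is nothing mathematically to fault.

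However, you are doing far more work than the paper does. The paper does not prove this theorem at all; it simply states it as background and writes ``See \cite{GareyJohnson} for the definition of $NP$-completeness and for the proof of this theorem.'' Theorem~2.1 here is quoted infrastructure, not a result of the paper, and the only ``proof'' expected is the citation. Your second route is in fact a sketch of the very argument that \cite{GareyJohnson} contains, so in spirit you and the paper agree on the source; the difference is that the paper outsources the entire argument while you reproduce it. What your write-up buys is self-containment; what the paper's one-line citation buys is brevity and focus, since the point of the paper is the reduction \emph{from} \prob{3SAT} to \prob{Common Permutation}, not the classical fact that \prob{3SAT} is $NP$-complete.
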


See \cite{GareyJohnson} for the definition of $NP$-completeness and for the
proof of this theorem.

\section{Alignments}
\label{alignments}
In Section \ref{proof} we will use a notion of
\textsl{alignments}. Imagine the two input strings of \prob{Common Permutation}
written in two rows, one string per row. For every symbol of the alphabet
$\Sigma$ we want to find exactly one occurrence of that symbol in both
strings, such that we can ``align'' those occurrences. 

\begin{example}
For two strings ``$\symb{bcaba}$'' and ``$\symb{babcca}$'', one of the possible
alignments is depicted below (the aligned occurrences are bold)
\[\begin{array}{cccccc}
\mathbf{b}&  &\mathbf{c}&\symb{ab}&&\mathbf{a}\\
\mathbf{b}&\symb{ab}&\mathbf{c}&&\symb{c} &\mathbf{a}
\end{array}\]
\end{example}

Formally, let $a$ and $b$ be strings over an alphabet $\Sigma$. Let $n$ be the
number of symbols in $\Sigma$. An \textsl{alignment}
(denoted $A$) of $a$ and $b$ is a sequence of ordered pairs $A=\lp i_1, j_1 \rp,
\lp i_2, j_2 \rp, \dots, \lp i_n, j_n \rp$ such that for all $k$, the value of $i_k$ is a position in the
string $a$, $j_k$ is a position in the string $b$, and $a_{i_k} = b_{j_k}$.
Moreover, $i_1<\cdots<i_n$, $j_1<\cdots<j_n$, and 
$a_{i_1}a_{i_2}\dots a_{i_n} (= b_{j_1}b_{j_2}\dots b_{j_n})$ is a~permutation of $\Sigma$. 

For all $k$ we say that, in the alignment $A$, the position $i_k$ in the string $a$ \textsl{is
aligned with} the position $j_k$ in the string $b$. We also say that the symbol 
$a_{i_k} (=b_{j_k})$ is \textsl{aligned} at the position $i_k$ in $a$, and at
the position $j_k$ in $b$. Positions $i_k$ and $j_k$ are \textsl{aligned
occurrences} of $a_{i_k}$.

Notice that once a position $i$ (in $a$) is aligned with a position $j$ (in
$b$), positions
less than $i$ (in $a$) cannot be aligned with positions greater than $j$ (in $b$)
and vice versa. In
other words, the aligned occurrences of different symbols cannot ``cross''.

\begin{lemma}Let $a$ and $b$ be two strings over $\Sigma$. A
permutation of $\Sigma$ which is a common subsequence of $a$ and
$b$ exists if and only if there exists one or more alignments of $a$
and $b$.
\end{lemma}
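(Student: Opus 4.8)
The plan is to prove both directions of the equivalence by direct translation between the two objects — a permutation that is a common subsequence, and an alignment — essentially unwinding the definitions.

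First I would prove the forward direction. Suppose $p = p_1 p_2 \dots p_n$ is a permutation of $\Sigma$ that is a common subsequence of $a$ and $b$. Since $p$ is a subsequence of $a$, there are indices $i_1 < i_2 < \dots < i_n$ with $a_{i_k} = p_k$; similarly, since $p$ is a subsequence of $b$, there are indices $j_1 < j_2 < \dots < j_n$ with $b_{j_k} = p_k$. Then $a_{i_k} = p_k = b_{j_k}$ for all $k$, so the sequence of pairs $\lp i_1, j_1 \rp, \dots, \lp i_n, j_n \rp$ satisfies every clause of the definition of an alignment: the index sequences are strictly increasing, the symbols match pairwise, and $a_{i_1} \dots a_{i_n} = p_1 \dots p_n = p$ is a permutation of $\Sigma$. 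Hence an alignment exists. Note that $n$ here is exactly $|\Sigma|$ because $p$ is a permutation of $\Sigma$, matching the length requirement in the definition.

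Conversely, suppose an alignment $A = \lp i_1, j_1 \rp, \dots, \lp i_n, j_n \rp$ of $a$ and $b$ exists. By definition, $a_{i_1} a_{i_2} \dots a_{i_n}$ is a permutation of $\Sigma$; call it $p$. Since $i_1 < \dots < i_n$ are positions in $a$, the string $p = a_{i_1} \dots a_{i_n}$ is by definition a subsequence of $a$. Since $j_1 < \dots < j_n$ are positions in $b$ and $b_{j_k} = a_{i_k} = p_k$ for all $k$, the string $p = b_{j_1} \dots b_{j_n}$ is also a subsequence of $b$. Thus $p$ is a permutation of $\Sigma$ that is a common subsequence of $a$ and $b$.

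I expect no serious obstacle here: both directions are immediate from the definitions of subsequence, common subsequence, permutation, and alignment, and the two constructions are literally inverse to each other. The only point demanding a moment's care is bookkeeping the length $n = |\Sigma|$ consistently — the definition of alignment already fixes the length of the pair sequence to be $n = |\Sigma|$, and a permutation of $\Sigma$ has exactly that length, so the two notions of "length" agree and no padding or truncation argument is needed. The "non-crossing" remark preceding the lemma is not actually required for the proof; it is just an observation, so I would not invoke it.
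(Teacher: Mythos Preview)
Your proof is correct and takes the same approach as the paper: both argue that an alignment is exactly the data of a common subsequence that is a permutation of $\Sigma$, read off from the definitions. The paper's version is simply much terser---a single sentence---while you have spelled out both directions in full detail.
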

\begin{proof}
An alignment corresponds to subsequences in $a$ and $b$ which comprise a
(common) permutation of $\Sigma$.
\end{proof}

According to this lemma, an alignment of two strings is an existence proof (of
a polynomial size with respect to lengths of the strings) for an instance of
\prob{Common Permutation}. Therefore, \prob{Common Permutation} is in $NP$. The
proof of $NP$-\textsl{completeness} follows in the next section.

\section{Reduction}
\label{proof}
In this section we will reduce \prob{3SAT} to \prob{Common Permutation}.

\begin{theorem}
\label{CPisNPC}
\prob{Common Permutation} is $NP$-complete.
\end{theorem}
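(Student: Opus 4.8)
The containment of \prob{Common Permutation} in $NP$ is already settled by Lemma~3.1 (an alignment is a succinct certificate), so the task is to prove $NP$-hardness, and the plan is to give a polynomial-time reduction from \prob{3SAT}. Starting from an instance with variable set $U=\{u_1,\dots,u_n\}$ and clause collection $C=\{c_1,\dots,c_m\}$ (three literals per clause), I would build an alphabet $\Sigma$ and two strings $a,b$ over $\Sigma$ such that $C$ is satisfiable iff $a$ and $b$ have a common permutation subsequence; by Lemma~3.1 the latter is equivalent to the existence of an alignment of $a$ and $b$, and the whole argument is driven by the non-crossing property of alignments: an aligned symbol splits each string into a left part and a right part, which are then aligned only against each other.

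The alphabet would contain, for each variable $u_i$, two \emph{literal symbols} $\symbvar{u}{i}$ and $\symbvarneg{u}{i}$; for each clause $c_j$ and each of its three literal slots, one \emph{slot symbol}, which I would name $\symba{j},\symbb{j},\symbc{j}$; and a few \emph{boundary symbols} $\boundary$, each occurring exactly once in each string. Each of $a$ and $b$ would be organised as a \emph{variable stage} followed by a \emph{clause stage}, separated by a boundary symbol, so that in any alignment the variable stage of $a$ aligns entirely within the variable stage of $b$, and likewise for the clause stages. In the variable stage, variable $u_i$ gets a gadget involving $\symbvar{u}{i}$, $\symbvarneg{u}{i}$ and the slot symbols of every slot occupied by a literal of $u_i$; with a few repeated occurrences placed in mirrored patterns in $a$ and $b$, the gadget would have essentially two alignment modes — one reading $t(u_i)=T$, one reading $t(u_i)=F$ — and in each mode it would ``absorb'' (align inside itself) exactly the slot symbols of the slots whose literal is \emph{false} under that value; a slot symbol not absorbed in the variable stage is then forced into the clause stage. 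In the clause stage, clause $c_j$ gets a gadget whose only ``new'' material is its own slot symbols $\symba{j},\symbb{j},\symbc{j}$, arranged differently in $a$ and in $b$ so that, by non-crossing alone, the gadget admits a valid alignment iff at least one of its three slot symbols was \emph{not} absorbed in the variable stage, i.e. iff at least one literal of $c_j$ is true. The coupling between the two stages is thus carried entirely by the slot symbols, each of which, being unable to straddle a boundary, transmits one bit from a variable gadget to a clause gadget.

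For correctness I would argue both implications directly. Given a satisfying assignment $t$, I would build an alignment by running each variable gadget in its $t(u_i)$ mode and, in each clause gadget, routing the (non-empty) set of surviving slot symbols through the gadget; one then checks that the aligned pairs are strictly increasing in both coordinates, that every symbol of $\Sigma$ is used exactly once, and that the induced string is a permutation of $\Sigma$. Conversely, from any alignment $A$ I would read $t(u_i)$ off the mode of the $u_i$-gadget (well defined since only two modes occur), note that the slot symbols absorbed in the variable stage are exactly those of literals false under $t$, and then use that each clause gadget is nonetheless aligned: at least one of $\symba{j},\symbb{j},\symbc{j}$ must have escaped absorption, so $c_j$ has a true literal, hence $t$ satisfies $C$. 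Since $|\Sigma|$, $|a|$, $|b|$ are polynomial in $n+m$ and the construction is clearly polynomial-time, this gives $NP$-hardness, and with Lemma~3.1, $NP$-completeness.

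Finally, for the refinement to strings whose symbols each occur at most twice: a literal symbol would occur only inside its variable gadget and a slot symbol only inside its variable gadget and its clause gadget, so with a careful layout each symbol occurs at most twice in $a$ and at most twice in $b$; should a high-degree variable gadget threaten this, I would first reduce to the standard bounded-occurrence form of \prob{3SAT}, which is still $NP$-complete, and size the gadgets accordingly. The genuine difficulty of the whole proof is concentrated in the two gadget designs — above all the clause gadget, which must use non-crossing to let any single surviving slot symbol complete the alignment while a configuration with all three absorbed (or an internally inconsistent variable assignment) cannot be completed — and it must do so within the two-occurrence budget; granting the gadgets, the two directions of the equivalence reduce to a finite case check on how an alignment may pass through each gadget.
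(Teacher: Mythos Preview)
Your overall strategy matches the paper's: reduce from \prob{3SAT}, split each string into a variable (``truth-setting'') stage and a clause (``satisfaction-testing'') stage separated by a unique boundary symbol, and let the non-crossing property of alignments force each remaining symbol into exactly one stage. However, the logic of your clause gadget is inverted, and as written the converse implication (alignment $\Rightarrow$ satisfying assignment) fails.

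You stipulate that the variable gadget absorbs exactly the slot symbols of \emph{false} literals, and that the clause gadget ``admits a valid alignment iff at least one of its three slot symbols was \emph{not} absorbed in the variable stage''. But you also say that the clause gadget's only material is occurrences of its three slot symbols, and that literal symbols live entirely in the variable stage. Now suppose $c_j$ is unsatisfied under the assignment $t$ you read off the variable gadgets: all three of $\symba{j},\symbb{j},\symbc{j}$ are absorbed in the variable stage, the clause gadget has no symbol left that \emph{must} be aligned inside it, and it participates in a valid global alignment by contributing nothing. A gadget built only from symbols that can all be aligned elsewhere simply cannot detect the ``all absorbed'' configuration. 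The remedy is to flip the convention: have the variable gadget absorb the slot symbols of \emph{true} literals and design the clause gadget so that it can simultaneously align at most two of $\symba{j},\symbb{j},\symbc{j}$; an unsatisfied clause then forces all three into the clause stage, which is impossible. This is exactly the paper's mechanism, though with a slightly different symbol scheme: it introduces a pair $\symbvar{u}{i},\symbvarneg{u}{i}$ for each variable--clause incidence (rather than separate ``literal'' and ``slot'' symbols), the truth-setting block for $u$ can align all the $\symbvar{u}{i}$ or all the $\symbvarneg{u}{i}$ but not a mixture, and the clause block for $c_i=\{x,y,z\}$ is arranged so that $\symbvar{x}{i},\symbvar{y}{i},\symbvar{z}{i}$ always fit while at most two of $\symbvarneg{x}{i},\symbvarneg{y}{i},\symbvarneg{z}{i}$ do.
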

\begin{proof}
Let $U$ be a finite set of variables and $C=\{c_1,c_2,\dots,c_n\}$
be a set of clauses over $U$. We have to construct an alphabet $\Sigma$ and two strings
$a,b$ over $\Sigma$ such that there exists a permutation of
$\Sigma$ which is a common subsequence of both $a$ and $b$ if and only if $C$ is
satisfiable.

The proof consists of two parts. The first part presents the construction of
$\Sigma$ and the strings. The second part proves that the construction
is correct in a sense that it satisfies the property described above.

\paragraph{Construction}
The alphabet $\Sigma$ consists of a pair of symbols $\symbvar{u}{i}$ and
$\symbvarneg{u}{i}$ for every variable $u\in U$ and every clause $c_i$ for
which either $u\in c_i$ or $\varneg{u} \in c_i$.
Additionally, $\Sigma$ contains a special ``boundary'' symbol $\boundary$.

The strings $a$ and $b$ have two parts: ``truth-setting'' part and ``satisfaction testing''
part. The parts are separated by the boundary symbol which ensures that
occurrences from one part cannot be aligned with occurrences from the other part of the
strings.

The ``truth-setting'' part consists of a concatenation of blocks, one for each
variable. Let $u$ be a variable from $U$ and let $\{i_1,\dots,i_m\}$ be the indexes of
clauses in which it appears. The strings contain the following block for
variable $u$:
\[\begin{array}{ccll}
a &=&\dots
\symbvar{u}{i_1}\symbvar{u}{i_2}\dots\symbvar{u}{i_m}
&\symbvarneg{u}{i_1}\symbvarneg{u}{i_2}\dots\symbvarneg{u}{i_m}
\dots\\
b &=&\dots
\symbvarneg{u}{i_1}\symbvarneg{u}{i_2}\dots\symbvarneg{u}{i_m}
&\symbvar{u}{i_1}\symbvar{u}{i_2}\dots\symbvar{u}{i_m}
\dots
\end{array}\]

This block is constructed in such a way that it is possible to simultaneously 
align all the symbols $\{\symbvar{u}{i_1},\symbvar{u}{i_2},\dots,\symbvar{u}{i_m}\}$
inside this block, or all the symbols
$\{\symbvarneg{u}{i_1},\symbvarneg{u}{i_2},\dots,\symbvarneg{u}{i_m}\}$.
It is, however, not possible to simultaneously align both $\symbvar{u}{i}$ and
$\symbvarneg{u}{j}$ for some $i$ and $j$ inside this block.

The ``satisfaction-testing'' part consists of a concatenation of blocks, one
for each clause. For a clause $c_i\in C$, let $x$, $y$, and $z$ be the
literals in the clause $c_i$, i.e., $c_i=\{x,y,z\}$. We use the
following notation:
\begin{itemize}
\item if $x = u$ for $u\in U$, then $\symbvar{x}{i} = \symbvar{u}{i}$ and  
$\symbvarneg{x}{i} = \symbvarneg{u}{i}$
\item if $x = \varneg{u}$ for $u\in U$, then 
$\symbvar{x}{i} = \symbvarneg{u}{i}$ and  
$\symbvarneg{x}{i} = \symbvar{u}{i}$
\end{itemize}

The strings contain the following block for the clause $c_i$:
\[\begin{array}{cclc}
a &=& \dots
\symbvar{x}{i}\symbvar{y}{i}\symbvar{z}{i}
&\symbvarneg{x}{i}\symbvarneg{y}{i}\symbvarneg{z}{i}
\dots\\
b &=& \dots
\symbvar{x}{i}\symbvar{y}{i}\symbvar{z}{i}
&\symbvarneg{y}{i}\symbvarneg{x}{i}\symbvarneg{z}{i}\symbvarneg{y}{i}
\dots
\end{array}\]

The block has two parts. The left part is the same for both strings. The
right part is constructed in such a way that the symbols
$\{\symbvarneg{x}{i},\symbvarneg{y}{i},\symbvarneg{z}{i}\}$ cannot be
simultaneously aligned in this block. Notice that these are the symbols
corresponding to the truth assignment for which the clause is false.

The alphabet $\Sigma$ contains $6n + 1$ symbols. The length of the string $a$ is
$6n + 1 + 6n=12n + 1$; the length of the string $b$ is $6n + 1 + 7n = 13n +
1$. Therefore, the size of the constructed \prob{Common Permutation} instance is polynomial
with respect to the original \prob{3SAT} instance. The construction can be
carried out in polynomial time.

\begin{example}
For a set of variables $\{w,x,y,z\}$ and
clauses $\{\{w,\varneg{x},y\},\{\varneg{z},x,\varneg{y}\}\}$ which represent
the logical function $$(w\lor \bar{x} \lor y)\land(\bar{z}\lor x\lor \bar{y})$$
we get the alphabet $$\Sigma=\{
\symbvar{w}{1},\symbvarneg{w}{1},
\symbvar{x}{1},\symbvarneg{x}{1},\symbvar{x}{2},\symbvarneg{x}{2},
\symbvar{y}{1},\symbvarneg{y}{1},\symbvar{y}{2},\symbvarneg{y}{2},
\symbvar{z}{2},\symbvarneg{z}{2}\}$$
and the following strings:
\[
\begin{array}{llllcll}
a = \symbvar{w}{1}\symbvarneg{w}{1}
&\symbvar{x}{1}\symbvar{x}{2}\symbvarneg{x}{1}\symbvarneg{x}{2}
&\symbvar{y}{1}\symbvar{y}{2}\symbvarneg{y}{1}\symbvarneg{y}{2}
&\symbvar{z}{2}\symbvarneg{z}{2}
&\boundary
&\symbvar{w}{1}\symbvarneg{x}{1}\symbvar{y}{1}\symbvarneg{w}{1}\symbvar{x}{1}\symbvarneg{y}{1}
&\symbvarneg{z}{2}\symbvar{x}{2}\symbvarneg{y}{2}\symbvar{z}{2}\symbvarneg{x}{2}\symbvar{y}{2}
\\
b = \underbrace{\symbvarneg{w}{1}\symbvar{w}{1}}_{\mbox{for }w}
&\underbrace{\symbvarneg{x}{1}\symbvarneg{x}{2}\symbvar{x}{1}\symbvar{x}{2}}_{\mbox{for }x}
&\underbrace{\symbvarneg{y}{1}\symbvarneg{y}{2}\symbvar{y}{1}\symbvar{y}{2}}_{\mbox{for }y}
&\underbrace{\symbvarneg{z}{2}\symbvar{z}{2}}_{\mbox{for }z}
&\boundary
&\underbrace{\symbvar{w}{1}\symbvarneg{x}{1}\symbvar{y}{1}\symbvar{x}{1}\symbvarneg{w}{1}\symbvarneg{y}{1}\symbvar{x}{1}}_{\mbox{for clause 1}}
&\underbrace{\symbvarneg{z}{2}\symbvar{x}{2}\symbvarneg{y}{2}\symbvarneg{x}{2}\symbvar{z}{2}\symbvar{y}{2}\symbvarneg{x}{2}}_{\mbox{for clause 2}}
\end{array}
\]
\end{example}

\paragraph{Correctness} Now we verify that the constructed strings $a$ and
$b$ contain a common permutation of $\Sigma$ if and only if $C$ is satisfiable.

Let $t:U\to\{T,F\}$ be any satisfying truth assignment for $C$. We will show that
there exists a permutation of $\Sigma$ which a common subsequence of both $a$
and $b$, i.e., that it is possible to align all symbols from
$\Sigma$ in the strings.

There is only one choice how to align the boundary symbol. For a variable $u\in U$,
if $t(u)=T$, we align $\symbvarneg{u}{i}$ symbols for all $i$ in the
``truth-assigning'' part and
$\symbvar{u}{i}$ in the ``satisfaction-testing'' part. If $t(u) = F$ we
conversely align $\symbvar{u}{i}$ symbols in the ``truth-assigning''
part and $\symbvarneg{u}{i}$ in the ``satisfaction-testing'' part.

As we noted during construction, the desired alignment in the ``truth-assigning'' part
is always possible to find. To show that we can align the remaining symbols in the
``satisfaction-testing'' part, we use that $t$ is satisfying $C$.

Notice that the symbols which we have to align
in the ``satisfaction-testing'' part correspond to the truth values of the variables. For example, if
we have to align symbol $\symbvarneg{u}{i}$ in this part, we know that $t(u)=F$.
For every clause $c_i\in C$, $c_i=\{x,y,z\}$, we align the
remaining symbols corresponding to clause $c_i$ in the block for $c_i$.
The symbols $\symbvar{x}{i}$, $\symbvar{y}{i}$, and $\symbvar{z}{i}$ can be
aligned in the first part of the block. It is easy to see that any pair of
symbols from $\{\symbvarneg{x}{i},\symbvarneg{y}{i},\symbvarneg{z}{i}\}$ can be aligned in
the second part. Therefore, for all seven possibilities how $c_i$ can be satisfied,
we can align the corresponding symbols in the block for the clause $c_i$.

For the proof in the opposite direction, suppose now that $a$ and $b$
have a common permutation of the symbols in
$\Sigma$. We will construct a satisfying truth assignment for $C$. For that
we look at ``truth-setting'' part of the strings. For a
variable $u\in U$,
\begin{itemize}
\item if the symbol $\symbvarneg{u}{i}$ for some $i$ is aligned in the
``truth-setting'' part of the strings, we set $t(u)=T$,
\item if the symbol $\symbvar{u}{i}$ for some $i$ is aligned in the
``truth-setting'' part of the strings, we set $t(u)=F$,
\item if none of the symbols $\{\symbvar{u}{i}, \symbvarneg{u}{i}\}$ are aligned in the
``truth-setting'' part, we set $t(u)$ arbitrarily, say $t(u)=T$.
\end{itemize}

Notice that, according to the construction of the
``truth-setting'' part, this is a valid definition of the assignment, i.e., it
cannot happen that we would want to assign $t(u)$ to both $T$ and $F$.

We now have to prove that $t$ is a satisfying truth assignment for $C$. For
any clause $c_i\in C$, let $x$, $y$, $z$ be its literals, so $c_i = \{x,y,z\}$.
We know that not all the symbols $\{\symbvarneg{x}{i}, \symbvarneg{y}{i},
\symbvarneg{z}{i}\}$ can be aligned in ``satisfaction-testing'' part of the
strings, so at least one them must be aligned in the ``truth-setting'' part.
Without loss of generality say it is $\symbvarneg{x}{i}$. Therefore, by the
definition of $t$, we know that the literal $x$ is true, and therefore $c_i$ is true.
\end{proof}

Our construction in the proof used every symbol at most twice in the string
$a$, but used some of the symbols three times in the string $b$. The following
corollary shows that we can use a slightly different construction which
uses every symbol at most twice in both the strings.

\begin{corollary}
\label{twosymbols}
\prob{Common Permutation} is $NP$-complete even if every symbol occurs at most
twice in the given strings.
\end{corollary}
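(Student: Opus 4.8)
The plan is to keep the reduction from the proof of Theorem~\ref{CPisNPC} intact and to change only the satisfaction-testing block of each clause, then re-run the correctness argument with the new block substituted for the old one.

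First I would pin down where the bound is violated. String $a$ already uses every symbol at most twice; the only offenders lie in string $b$, where for each clause $c_i$ the symbol $\symbvarneg{y}{i}$ — the negation of the second literal of $c_i$ — occurs once in the truth-setting block of its variable and twice in $b$'s block for $c_i$, i.e.\ three times in all. So I want to rebuild the clause block so that in $b$ no literal-symbol of the clause occurs more than once, leaving the truth-setting blocks (and hence those symbols' behaviour there) exactly as before.

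A purely cosmetic rewrite cannot work: if each of $\symbvarneg{x}{i},\symbvarneg{y}{i},\symbvarneg{z}{i}$ occurs once in the block on each side, then making the block accept every two-element subset of $\{\symbvarneg{x}{i},\symbvarneg{y}{i},\symbvarneg{z}{i}\}$ (needed whenever a satisfying assignment makes two of the three literals of $c_i$ false) forces $a$ and $b$ to induce the same relative order on the whole triple, and then the all-false triple is accepted too. The remedy is to let the ``doubled'' symbol of the block be a fresh symbol $\symbvar{r}{i}$ introduced for each clause; since $\symbvar{r}{i}$ occurs nowhere but in the block for $c_i$, an alignment is forced to align it inside that block, so it can police the negated symbols. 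Concretely, I would replace the block for $c_i=\{x,y,z\}$ with
\[\begin{array}{rcl}
a &=& \cdots\symbvar{x}{i}\symbvar{y}{i}\symbvar{z}{i}\,\symbvar{r}{i}\,\symbvarneg{x}{i}\,\symbvarneg{y}{i}\,\symbvar{r}{i}\,\symbvarneg{z}{i}\cdots\\
b &=& \cdots\symbvar{x}{i}\symbvar{y}{i}\symbvar{z}{i}\,\symbvarneg{x}{i}\,\symbvar{r}{i}\,\symbvarneg{y}{i}\,\symbvarneg{z}{i}\,\symbvar{r}{i}\cdots
\end{array}\]
and then check the two block properties: (i) for each of the seven ways a truth assignment can satisfy $c_i$, the three literal-symbols that assignment forces into the block, together with $\symbvar{r}{i}$, are a common subsequence of the two blocks; (ii) $\{\symbvarneg{x}{i},\symbvarneg{y}{i},\symbvarneg{z}{i}\}$ together with $\symbvar{r}{i}$ is not. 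Property (ii) is immediate once one observes that the orderings of $\{\symbvarneg{x}{i},\symbvarneg{y}{i},\symbvarneg{z}{i},\symbvar{r}{i}\}$ realizable in the $a$-block are only $\symbvar{r}{i}\symbvarneg{x}{i}\symbvarneg{y}{i}\symbvarneg{z}{i}$ and $\symbvarneg{x}{i}\symbvarneg{y}{i}\symbvar{r}{i}\symbvarneg{z}{i}$, while in the $b$-block they are $\symbvarneg{x}{i}\symbvar{r}{i}\symbvarneg{y}{i}\symbvarneg{z}{i}$ and $\symbvarneg{x}{i}\symbvarneg{y}{i}\symbvarneg{z}{i}\symbvar{r}{i}$, and these four lists are pairwise distinct; property (i) is a short finite check, since $\symbvar{x}{i}\symbvar{y}{i}\symbvar{z}{i}$ is still a common prefix and $\symbvar{r}{i}$ can always be slipped in.

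Finally I would do the bookkeeping — $\Sigma$ now has $7n+1$ symbols and $a,b$ both have length $14n+1$, the construction is still polynomial-time, and by construction every symbol occurs at most twice in each string — and observe that the correctness proof of Theorem~\ref{CPisNPC} carries over essentially verbatim once its two appeals to ``the symbols $\symbvarneg{x}{i},\symbvarneg{y}{i},\symbvarneg{z}{i}$ cannot be simultaneously aligned in this block'' and ``any pair of them can'' are replaced by (ii) and (i) (it is the fact that $\symbvar{r}{i}$ is forced into the block that makes (ii) bite). The main obstacle is the one flagged above: recognizing that no single-occurrence block can do the job and that a forced fresh per-clause symbol, with its two occurrences placed so that their position relative to $\symbvarneg{x}{i},\symbvarneg{y}{i},\symbvarneg{z}{i}$ never coincides on the two sides, is exactly what is needed. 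I would also include a worked example in the style of the one in the proof of Theorem~\ref{CPisNPC} to make the check transparent.
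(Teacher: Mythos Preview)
Your proposal is correct, and the gadget you build is both simpler and tighter than the one in the paper. The paper, like you, leaves the truth-setting part and the boundary symbol untouched and only replaces each clause block, but it adds \emph{three} fresh symbols $\symba{i},\symbb{i},\symbc{i}$ per clause and uses a twelve-symbol block on each side that no longer keeps the common prefix $\symbvar{x}{i}\symbvar{y}{i}\symbvar{z}{i}$; your block adds a single fresh symbol $\symbvar{r}{i}$, keeps the prefix, and has length eight on each side. Both constructions rest on the same key observation you spell out explicitly (and the paper leaves implicit): with one occurrence per side of each of $\symbvarneg{x}{i},\symbvarneg{y}{i},\symbvarneg{z}{i}$, a pure literal gadget cannot separate ``all pairs alignable'' from ``the full triple alignable'', so one must introduce fresh symbols that occur only inside the block and are therefore forced to be aligned there. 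Your single doubled $\symbvar{r}{i}$, placed so that its position relative to the $\symbvarneg{\;}{i}$-symbols never matches across the two sides, achieves this with the minimum possible overhead; the paper's three extra symbols buy nothing beyond that, though they do give a symmetric-looking block. The correctness argument you describe---carrying over the proof of Theorem~\ref{CPisNPC} with the two block properties replaced by your (i) and (ii), and using that $\symbvar{r}{i}$ is forced into the block to make (ii) bite---is exactly what the paper does (in one sentence) for its own block, and your bookkeeping ($7n+1$ symbols, strings of length $14n+1$, each symbol at most twice) is correct.
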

\begin{proof}
We will use the same construction as in the proof of Theorem
\ref{CPisNPC}, except for the definition of $\Sigma$, and blocks for clauses.
For every clause $c_i\in C$ we will add three additional symbols $\symba{i}$, $\symbb{i}$, and $\symbc{i}$ to the
alphabet $\Sigma$. The strings contain the following block for the clause $c_i$:
\[\begin{array}{ccl}
a &=& \dots \symbvarneg{x}{i}\symbvar{x}{i}\symba{i}\symbb{i}\symbvar{y}{i}\symbvarneg{y}{i}\symba{i}\symbc{i}\symbb{i} \symbvarneg{z}{i} \symbc{i} \symbvar{z}{i}\dots\\
b &=& \dots \symbvar{x}{i} \symba{i}\symbvarneg{x}{i}\symbvarneg{y}{i}\symbb{i}\symbvar{y}{i}\symbc{i}\symba{i}\symbb{i}\symbc{i} \symbvar{z}{i}\symbvarneg{z}{i}\dots
\end{array}\]

One can verify that inside this block we can
align symbols corresponding to the satisfying assignment for $c_i$, but we cannot 
align simultaneously $\symbvarneg{x}{i}$, $\symbvarneg{y}{i}$, and
$\symbvarneg{z}{i}$.

The constructed strings contain every symbol exactly twice with the exception
of $\boundary$ which they contain once.
\end{proof}

\section{Acknowledgements}

The author would like to thank Martin Alter for suggesting Corollary
\ref{twosymbols}, J\'an Katreni\v c for discussions about the problem, and
authors of \cite{Crochemore} for bringing the problem (again) to author's attention.

\end{document}